\newtheorem{theorem}{Theorem}
\newtheorem{lemma}{Lemma}
\definecolor{MyBlue}{rgb}{0,.2,.8}
\definecolor{MyRed}{rgb}{.8,.2,0}
\newacronym{opi}{OPI}{Output permutation invariant}
\newacronym{ejm}{EJM}{Elegant joint measurement}
\newacronym{nsi}{NSI}{No-signalling and independent}
\newacronym{qkd}{QKD}{quantum key distribution}
\newacronym{di}{DI}{device-independent}
\newacronym{pm}{PM}{prepare-and-measure}
\newacronym{sdp}{SDP}{semidefinite programming}
\newacronym{POVM}{POVM}{Positive Operator Valued Measure}
\newacronym{usd}{USD}{Unambiguous state discrimination}
\newacronym{qber}{QBER}{Quantum Bit Error Rate}
\newcommand{\corr}[1]{{\color{black}#1}}
\begin{document}
\author{Antoine Girardin}
\affiliation{Department of Applied Physics University of Geneva, 1211 Geneva, Switzerland}

\author{Nicolas Gisin}
\affiliation{Department of Applied Physics University of Geneva, 1211 Geneva, Switzerland}
\affiliation{Constructor University, Geneva, Switzerland}
\title{Violation of the Finner inequality in the four-output triangle network}

\begin{abstract}
Network nonlocality allows one to demonstrate nonclassicality in networks with fixed joint measurements, that is without random measurement settings. The simplest network in a loop, the triangle, with 4 outputs per party is especially intriguing. The “elegant distribution” [\href{https://doi.org/10.3390/e21030325}{N. Gisin, Entropy 21, 325 (2019)}] still resists analytic proofs, despite its many symmetries. In particular, this distribution is invariant under any output permutation. The Finner inequality, which holds for all local and quantum distributions, has been conjectured to be also valid for all no-signalling distributions with independent sources (NSI distributions). Here we provide evidence that this conjecture is false by constructing a 4-output network box that violates the Finner inequality and prove that it satisfies all NSI inflations up to the enneagon. As a first step toward the proof of the nonlocality of the elegant distribution, we prove the nonlocality of the distributions that saturates the Finner inequality by using geometrical arguments.
\end{abstract}

\maketitle
\section{Introduction}

The study of correlations in networks with independent sources has attracted a lot of attention recently, notably because of its ability to provide nonlocality without input \cite{review, Tavakoli_Review}. 
\corr{
The first studies of nonlocality started in the well-known Bell scenario~\cite{Bell}. In this scenario, two parties share some resources and the goal consists in maximizing a score. More generally, in Bell scenarios n parties share some common resources. The parties can agree on a strategy before the start of the game, but can no longer communicate once the game starts. This game is used to prove that quantum mechanics cannot be explained with local variables. In this scenario, building Bell inequalities allows one to distinguish between local, and nonlocal correlations.
}

Nonlocality in networks differs fundamentally from the standard Bell nonlocality, \corr{in that, some resources are only shared by a subset of the parties}. In fact, Bell inequalities do not allow one to characterize nonlocality in networks, since the local regions in networks are nonconvex~\cite{branciard2012bilocal}. The development of novel methods to study network nonlocality is therefore needed. 

The triangle network has been of particular interest due to its minimal shape. The first example of triangle nonlocality comes from T.Fritz \cite{Fritz_2012}, though it uses the standard CHSH test~\cite{CHSH}. More recently, some distributions called “token counting” have been proved nonlocal \cite{ Renou_2019, Renou_TC_CM_22, Pozas_2023, krivachy_singlephoton}. All these examples of nonlocal distribution in the triangle are for the four-output case, but some distributions with fewer outputs have been found \cite{minimal_triangle_2022}. The 2-output triangle has been studied in detail, no sign of quantum nonlocality has been found yet, and the regions where one could still hope to find some gets smaller and smaller~\cite{gisin2020constraints, pozas_2triangle}. 

In this work, we focus on distributions in the 4-output triangle network without inputs. We study the subspace with distributions invariant under exchange of parties and outputs, that we call output permutation invariant (OPI) for short. This subspace contains the Elegant distribution, introduced in \cite{Gisin_2019}, obtained by using the elegant joint measurement (EJM) on shared maximally entangled two qubits states. This distribution is thought to be nonlocal \cite{krivachy_neural_2020}, but a proof is still awaited.

\begin{figure}[t]
	\centering
	\includegraphics[width=\linewidth]{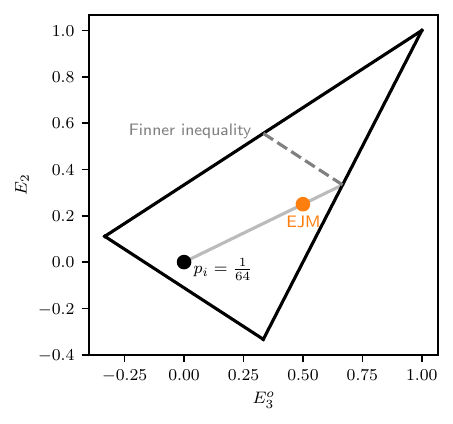}
	\caption{Scheme of the output permutation invariant (OPI) subspace of probability distribution with the Finner inequality and the elegant distribution (called EJM). Here, this subspace is parametrized using the two party marginal and the three-party marginal in a loop, respectively $E_2$ and $E_3^o$, defined in Eq.~\eqref{def_E2E3}. All distributions in the 4-output OPI triangle must be in the above triangle because of the positivity of all probabilities.}
	\label{fig:sheme}
\end{figure}

To characterize this symmetric subspace, we use the inflation technique~\cite{wolfe2019inflation} to exclude regions of the symmetric subspace that do not respect the no-signalling principle with the independence of the sources (NSI). As a first step in the direction of a proof of the nonlocality of the Elegant distribution, we give a proof of the nonlocality of the distributions in this subspace that saturates the Finner inequality, see Fig.~\ref{fig:sheme}. This proof uses a novel method that relies on geometric arguments that may also be useful for other distributions in the triangle network.

\section{Problem and Numerical methods}\label{problemandmethod}

The OPI subspace contains all the probability distributions that satisfy the invariance under exchange of parties, i.e. $p_{abc}=p_{bca}=…$, and the symmetry between all outputs, i.e. $p_{000}=p_{111}=…$ This subspace is two-dimensional for the 4-output triangle. Indeed, the only three different possible probabilities are $p_{111}$, when all parties give the same output, $p_{112}$, when two parties give the same output but not the third, and $p_{123}$, when all parties give a different output. These three probabilities have the additional constraint to sum to one:


\begin{equation}\label{psum1}
    4 p_{111} + 36 p_{112} + 24 p_{123} = 1
\end{equation}

We can also parametrize this subspace with correlators. We write the four outputs of each party with two bits $a= (a_0, a_1) \in \{-1,1\}^2$. It is convenient to define an additional bit $a_2 =a_0\cdot a_1$. Then, the only two nonvanishing correlators are the two party marginal $E_2$, \corr{later referred to as the "two-party correlator",} and the three-party marginal in a loop $E_3^o$ defined as

\begin{equation}\label{def_E2E3}
\begin{split}
        E_2 &= \langle a_j \cdot b_j\rangle = \langle a_j \cdot c_j\rangle = \langle b_j \cdot c_j\rangle \\
     E_3^o &= \langle a_j \cdot b_k \cdot c_l\rangle \\
     \text{where     } j, k, l &\in\{0,1,2\}, j\neq k\neq l \neq j
\end{split}
 \end{equation}

A linear transformation allows one to go from one parametrization to the other with the relation \eqref{corr_to_probs}.

\begin{equation}\label{corr_to_probs}
\begin{pmatrix}
	1 \\ E_2 \\ E_3^o
\end{pmatrix}=
\begin{pmatrix}
		4 & 36 & 24\\
		4 & 4 & -8\\
		4 & -12 & 8\\
\end{pmatrix} \cdot
\begin{pmatrix}
p_{111}\\ p_{112}\\ p_{123}
\end{pmatrix}
\end{equation}

The Finner inequality \cite{finner1992generalization} implies that 
\begin{equation}\label{Finner}
    p_{abc}\leq \sqrt{p(a)p(b)p(c)} 
\end{equation}

for any local or quantum distributions~\cite{renou2019limits}, with $p(a)$, $p(b)$, $p(c)$ the one-party marginals. This inequality is illustrated in the Fig.\ref{fig:sheme} together with the positivity constraints that form a triangle, the distribution obtained with the EJM and the fully noisy distribution. Note that here the nonlinear Finner inequality~\eqref{Finner} appears linear because all marginals $p(a)$, $p(b)$, $p(c)$ are set to $\frac{1}{4}$ by the OPI condition.

To bound the NSI region, we consider polygon inflations to find an upper bound on the two-party correlator $E_2$. Increasing the number of parties allows one to get more constraints on this correlator, because it appears in all polygons. One should notice that one cannot constrain the other correlator of the triangle, $E_3^o$, the tripartite correlator in a loop, since bigger polygons don't contain it. \corr{We introduce polygon inflations with more details in the appendix~\ref{appendix:inflation}.}

The constraints come from the NSI condition, \corr{as in Ref.~\cite{gisin2020constraints}}. If Alice locally modifies the topology of the network, the statistics should not be modified for Bob and Charlie, otherwise Alice could signal to Bob and Charlie.
\corr{For instance, in the first level of the inflation, inflating the topology from a triangle to a square, the nontrivial correlator appears 

\begin{equation}\label{example_square}
\begin{split}
        E &= \langle a_j \cdot c_j\rangle = \langle b_j \cdot d_j\rangle  \\
        &= \langle a_j \rangle \cdot \langle c_j \rangle \\
        &=0
\end{split}
 \end{equation}

 This correlator $E$ is the two-party correlator for nonconnected parties in the square network. The independence of the sources allows one to separate this two-party correlator by the product of the two one-party marginals in Eq.~\ref{example_square}. The symmetries of the problem implies that the one-party marginal is null, leading to the constraint that this correlator is null. The NSI condition allows one to conclude that the correlator $E_2$ in the square network has to be the same as the one in the triangle network. This allows the use of the inflated network to constraint correlators in the triangle. With the relation between the correlators and the probabilities, see Eq.~\ref{corr_to_probs} for the triangle and Eq.~\ref{matrix_square} for the square network, this leads to constraints on the probabilities too. We build all the constraints in this way and list them in the appendix~\ref{details_on_correlators}.
}

Our approach is not completely general because we suppose that all the sources are identical (but independent). In principle, the sources could distribute different correlations, which may lead to OPI distribution unachievable with identical sources. This additional assumption simplifies the problem significantly. Without this assumption, one could still use inflations with a number of parties that are a multiple of 3 (as in Ref.~\cite{gisin2020constraints}), so that each source appears the same number of times in the inflated network. Here, this constraint leads to a stricter bound on $E_2$ compared to the general NSI condition.

For the inflation, we use two different numerical methods. 
The first method exploits the Gurobi optimizer~\cite{gurobi} that allows one to optimize an objective with linear and quadratic constraints. We set the permitted violation of the constraint to the smallest possible value, $10^{-9}$, in order to recover more precise results.
The second method linearizes the quadratic constraints. This allows one to significantly speed up the optimization. For this method, we replace the correlator $E_2^2$ by $\bar{E_2}^2+\epsilon$, with $\bar{E_2}$ a constant that approximates the maximal value for $E_2$, our target. The parameter $\epsilon$ becomes the new parameter to maximize over.
The linearization of the quadratic constraint uses the approximation for a small $\epsilon<<\bar{E_2}^2$.

\begin{equation}\label{eq:num_approx}
\begin{split}
    E_2 = \sqrt{\bar{E_2}^2+\epsilon} 
     &\approx \bar{E_2} \left(1+\frac{\epsilon}{2\bar{E_2}^2}-\frac{\epsilon^2}{8\bar{E_2}^4}\right) \\
     &\leq \bar{E_2} \left(1+\frac{\epsilon}{2\bar{E_2}^2}\right) \\
\end{split}
\end{equation}

This approximation gets better by recursively maximizing $\epsilon$ and updating our value for $\bar{E_2}$. When $\epsilon$ gets comparable with the numerical imprecision, $\epsilon\approx 10^{-10}$, the error due to the approximation becomes negligible, and the bound found for $E_2$ is very reliable. The same method is used for other nonlinear constraints, such as correlators equal to $E_2\cdot E_3$ in the heptagon, 
\corr{where $E_3$ is the three-party correlator in a line}
(see appendix~\ref{details_on_correlators}).
As shown in Eq.~\eqref{eq:num_approx}, the value one converges to is ideally slightly greater than the exact bound for $E_2$. This is better than converging to a smaller value, since we are looking for an upper bound on $E_2$. It is the reason we add this $\epsilon$ to $\bar{E_2}^2$ instead of $E_2$, since the approximation would be smaller than the exact bound on $E_2$.
This second method is much faster than the quadratic solver Gurobi. As a trade-off, it does not give a solution containing exact zeros, which is relevant in the section~\ref{analytical_computation}. Moreover, Gurobi is able to give the number of optimal solutions by scanning exhaustively the parameters linked with the nonlinear constraints.

\begin{center}
	\begin{table}
		\centering
		\begin{tabular}{|c|c|c|}
			\hline
			Nb vertices & $E_2$ max  & $E_2$ max with smaller polygons \\
			\hline
			3 & 1 & 1\\
			\hline
			4 & 0.5 & 0.5\\
			\hline
			5 & 5/11 &  5/11 \\
			\hline
			6 & $\sqrt{2}-1$ & 0.404040\\
			\hline
			7 & 0.393141 & 0.392034 (0.392037)\\
			\hline
			8 &  0.381966 (0.381966) & 0.38003 (0.379197)\\
			\hline
			9 &  N/A (0.376608) &0.37491 (0.375051)\\
			\hline
		\end{tabular}
		\caption{Results of inflations with both numerical methods for polygons with up to nine edges. When the second method gives a different upper bound than Gurobi, its value is given in brackets. The first column considers only constraints from the current polygon, while the column “$E_2$ max with smaller polygons” also considers constraints due to all the smaller polygons. \corr{We could not find the result for 9 vertices without the smaller polygons with Gurobi in a respectable amount of time, see end of appendix~\ref{details_on_correlators}.}}
		\label{tab:results}
	\end{table}
\end{center}

\section{NSI bound on $E_2$}
\subsection{Numerical inflation}

We consider that the sources distribute some correlations to the parties and that all the sources distribute the same correlations. 
The inflations up to five vertices lead to linear constraints that can be solved with linear programming. The inflations with at least six vertices contain quadratic constraints, because two independent $E_2$ are allowed in the hexagon, see the appendix~\ref{details_on_correlators} for more details on correlators.

The maximal values of the correlator $E_2$ given by both numerical methods for any inflation up to nine are given in Table~\ref{tab:results}. In general, we can keep the constraints from the smaller polygons, but we add the results obtained by using the constraints from the largest polygon only. This could help to find an analytical structure to the results, as initiated in the section~\ref{analytical_computation}. We couldn't get a value with Gurobi for the last polygon without the smaller ones because the optimization was too slow.

\begin{figure}[t]
	\centering
	\includegraphics[width=\linewidth]{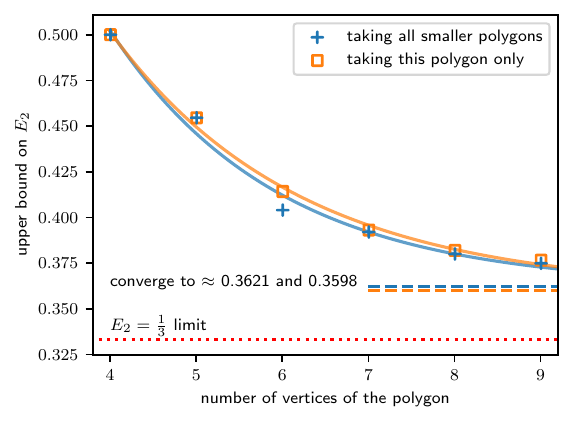}
	\caption{Results of the inflation for different sizes of the polygon. An exponential fit is added with the value it converges to.}
	\label{fig:plotinflation}
\end{figure}

\begin{figure*}[t]
	\begin{minipage}{0.48\textwidth}
		\centering
		\includegraphics[width=1\linewidth]{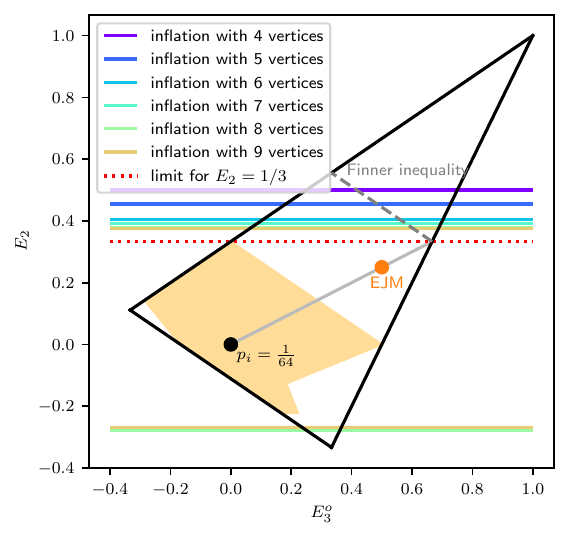}
	\end{minipage}\hfill
	\begin{minipage}{0.48\textwidth}
		\centering
		\includegraphics[width=1\linewidth]{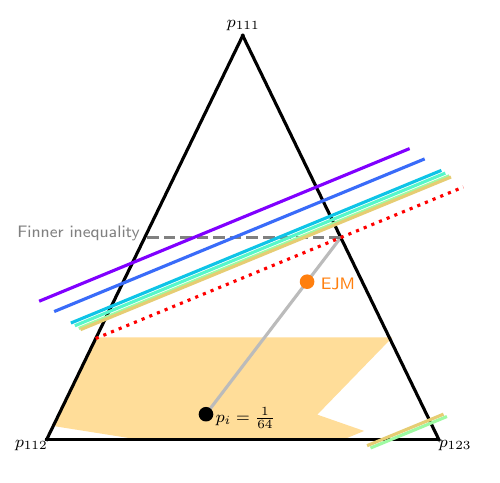}
	\end{minipage}
 \caption{Space of probability distributions with $E_2$ and $E_3^o$ (left) and the probabilities $p_{111}$, $p_{112}$ and $p_{123}$ (right). The known local region~\cite{local_region} is coloured in light orange. The EJM point is the distribution obtained with the Elegant Joint Measurement. The lines referenced in the legend are the bound on $E_2$ obtained with different levels of the inflation.}
 \label{fig:subspace_inflation}
\end{figure*}

Note that the upper bound on $E_2$ has to stay above $E_2=\frac{1}{3}$, since a local model reaching this value has been found~\cite{local_region} in this OPI subspace. Interestingly, $E_2=\frac{1}{3}$ is also obtained at the special point, where the Finner inequality meets the positivity constraint, as well as the straight line going through the fully noisy distribution and the elegant distribution.

Surprisingly, the upper bound imposed by the NSI condition seems to converge to a value greater than $\frac{1}{3}$. This suggests that NSI distributions can violate the Finner inequality, which goes against the conjecture proposed in Ref.~\cite{renou2019limits}, stating that the Finner inequality captures the limit of correlations possible in any NSI theory. \corr{To prove that our conjecture is true, one would need to show that no polygon inflation allows one to violate the Finner inequality, which could only be done analytically. We give the first analytical result in the section~\ref{analytical_computation}}.

We could not give a tight upper bound for $E_2$, but we can certify that it is between $0.37491$ and $\frac{1}{3}$. An exponential fit plotted in the Fig.~\ref{fig:plotinflation} converges to $\sim 0.36$. The only motivation for this exponential fit is empirical. In principle, the upper bound on $E_2$ could follow any decreasing function greater than $\frac{1}{3}$, and it is possible that this upper bound does not follow any analytical function. We additionally plot in the Fig.~\ref{fig:subspace_inflation} the symmetric subspace with the bounds given by each order of the inflation. 

We can use the same method to minimize $E_2$ and find a lower bound. The inflation of the order 8 and 9 gives us a nontrivial bound plotted in the Fig.~\ref{fig:subspace_inflation}. As for the upper bound, we could not find an optimal value for the lower bound. The lowest $E_2$ obtained with a known local model is $-\frac{2}{9} = -0.\bar{2}$~\cite{local_region}, which is not very far from our lower bound at $E_2=-0.2690928$.

\subsection{Analytical computation of a bound on a correlator}\label{analytical_computation}
\corr{
In order to find an exact bound for the correlator $E_2$, one needs to find a structure in order to prove analytically a convergence when the number of vertices becomes arbitrarily large. With this objective, we give a method that allows one to give an exact value for the hexagonal inflation, the first order unsolvable with linear programming because of the quadratic constraints.

Indeed, 
}
for the hexagon, one can analytically prove with the output of Gurobi that the value is exactly $\sqrt{2}-1$ in the following way: some probabilities $p_j$ in this hexagonal network that maximizes $E_2$ are $0$. These probabilities $p_j$ are the probabilities of the 33 different outputs given by all parties. We have for instance the probability that all parties give the same output, and all the other OPI outcomes.

One has $p_j=0 \implies \sum_k C_{jk} E_k =0$ with $C_{jk}$ the matrix that relates the probabilities to the correlators. For each of them, one has one parameter $q_j$ that we can vary to cancel nonzero correlators. At the end, this leads to the equation

\begin{center}
\begin{equation}
	\begin{split}
	&\sum_j \sum_k q_j C_{jk} E_k =0 \\
\text{const.} + x E_2 + & y E_2^2 + 0 \cdot(\text{other correlators}) =0
	\end{split}
\end{equation}
\end{center}

The constant comes from the normalization that has to be added to relate probabilities to correlators. The correlators $E_2$ and $E_2^2$ are the only two correlators one doesn't need to cancel, since they only contain the correlator $E_2$ that appears in the triangle. This leads to an exact number for the upper bound on $E_2$.

This method works for the hexagon, we have 22 parameters $q_j$ and 20 correlators to cancel. It leads to the equation 

\begin{equation}
	\begin{split}
		\frac{1}{256} + E_2^2 &\cdot (-\frac{1}{256})+E_2 \cdot (-\frac{1}{128})=0\\
		&\implies E_2=\sqrt{2}-1
	\end{split}
\end{equation} 

This method does not allow one to find an exact value for the next order of the inflation. Indeed, the solution given by the quadratic solver does not contain enough zeros to generate a nontrivial null space of the matrix $C_{jk}$.

\section{Result on the nonlocality of the distributions saturating the Finner inequality}
Proving the nonlocality of a given distribution is a difficult problem. This can be done in some cases with an inflation of the network or, if the distribution is token-counting, it is possible to prove its nonlocality in some cases.

A local distribution is a distribution that can be obtained using classical resources. It is sufficient to consider that each source distributes a number of symbols to the connected parties \cite{Rosset2018}. We can then map this problem to a 3-dimensional cube, where each axis represents one source~\cite{renou2019limits}. Each point $(\alpha, \beta, \gamma)\in [0,1]^3$ in the cube corresponds to the case when the sources have distributed the value $(\alpha, \beta, \gamma)$ and the three parties Alice, Bob, and Charlie have outputted $a=s_A(\beta, \gamma), b=s_B(\alpha, \gamma), c=s_C(\alpha, \beta)$, with $s_A, s_B, s_C$ the local strategies of the three parties. The probability $p_{abc}$ of Alice outputting $a$, Bob $b$, and Charlie $c$ correspond now to a volume in this cube.

\corr{
Any local distribution can then be constructed in this cube. Let's choose the colours white, blue, red, and green for the four outputs $0$, $1$, $2$, $3$. The local strategy of each party will then be a coloured square, illustrating the output of the party given the received symbols. These squares are the faces of the cube that uniquely define the final distribution. Alternatively, one can start building the distribution from different probabilities $p_{abc}$ represented as volumes in the cube and deduce the local strategy of each party ultimately. We use this second approach for our proof. 
}

We present now a new method to prove the nonlocality of a distribution using geometrical arguments in this cube. This method allows one to have a simple and understandable proof that does not rely on complicated inequality found by a computer, as we get with inflations. The proof is illustrated in the appendix~\ref{appendix:proof}.
We first introduce the lemma~\ref{lemma} that we use in the proof of the theorem~\ref{thm}.

\begin{lemma}\label{lemma}
	If $p_{000}=\frac{1}{8}$ and the marginals are $p(a=0)=p(b=0)=p(c=0)=\frac{1}{4}$, then $p_{000}$ is equivalent to a cube in the cube representation.
\end{lemma}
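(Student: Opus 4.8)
The plan is to observe that, at the given marginals, the value $p_{000}=\tfrac18$ lies exactly on the Finner bound $\sqrt{p(a)p(b)p(c)}=\sqrt{(1/4)^3}=\tfrac18$, so the lemma is really a statement about the \emph{equality case} of Finner read off geometrically in the local cube picture. First I would rewrite each event as a cylinder: since $a=s_A(\beta,\gamma)$ does not depend on $\alpha$, the set $\{a=0\}$ is a cylinder along the $\alpha$-axis whose base $A\subseteq[0,1]^2$ in the $(\beta,\gamma)$-plane has area $p(a=0)=\tfrac14$; likewise $\{b=0\}$ and $\{c=0\}$ are cylinders along the $\beta$- and $\gamma$-axes with bases of area $\tfrac14$. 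Writing $R=\{a=b=c=0\}$ for their common intersection, we have $|R|=p_{000}$, and each coordinate projection of $R$ is contained in the corresponding cylinder base, so $|\pi_{\beta\gamma}R|,\,|\pi_{\alpha\gamma}R|,\,|\pi_{\alpha\beta}R|\le\tfrac14$.

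The key inequality I would invoke is Loomis--Whitney, $|R|^2\le|\pi_{\beta\gamma}R|\,|\pi_{\alpha\gamma}R|\,|\pi_{\alpha\beta}R|$, which here is exactly the two-fold Cauchy--Schwarz underlying Finner for indicator functions. Combining it with the projection bounds gives $|R|^2\le(1/4)^3$, i.e. $|R|\le\tfrac18$, recovering Finner in this symmetric case. The hypothesis $|R|=\tfrac18$ then forces every inequality in the chain to be tight simultaneously.

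The conclusion follows by reading off the two equality conditions. Tightness of Loomis--Whitney forces $R$ to coincide, up to a null set, with a box $I_\alpha\times I_\beta\times I_\gamma$; tightness of the projection bounds then forces each projected area to equal $\tfrac14$, so $|I_\beta||I_\gamma|=|I_\alpha||I_\gamma|=|I_\alpha||I_\beta|=\tfrac14$ and hence $|I_\alpha|=|I_\beta|=|I_\gamma|=\tfrac12$. Thus $R$ is a cube of side $\tfrac12$ (equivalently, each source merely splits $[0,1]$ into two halves), and after a measure-preserving relabeling of each source this is the standard cube, which is the claim. The main obstacle I anticipate is the equality analysis of Loomis--Whitney for $\{0,1\}$-valued functions: one must show that tightness of the iterated Cauchy--Schwarz forces the $\alpha$-slices $\{\alpha:(\alpha,\gamma)\in B\}$ and $\{\alpha:(\alpha,\beta)\in C\}$ to agree for almost every $(\beta,\gamma)\in A$, and since one set depends only on $\gamma$ while the other depends only on $\beta$, both must be almost everywhere a fixed set; propagating this identity through all three variables is what yields the product structure. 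Handling these almost-everywhere statements and the attendant null sets carefully is where the real work lies.
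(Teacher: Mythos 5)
Your proof is correct and follows essentially the same route as the paper's: saturation of the Finner bound $p_{000}\le\sqrt{p(a)p(b)p(c)}=\tfrac18$ forces the region $\{a=b=c=0\}$ to be a product set (a box), and the equality of the three marginals then forces the side lengths to satisfy $xy=yz=xz$, hence a cube of side $\tfrac12$. The only difference is one of rigor: the paper simply asserts that saturation implies a rectangular parallelepiped, whereas you supply the justification via the equality case of the Loomis--Whitney / iterated Cauchy--Schwarz argument underlying Finner, which is exactly the detail the paper leaves implicit.
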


\begin{proof}
The Finner inequality implies $p_{abc}\leq \sqrt{p(a)p(b)p(c)}$. The values we have in the lemma imply that we saturate the Finner inequality, so the shape of $p_{000}$ is a rectangular parallelepiped in the cube. The rectangular parallelepiped has length $x$, $y$ and $z$ and because the marginals are the same, we have $xz=yz=xy$. The first equality implies $x=y$, the last implies $x=z$, so the proof is complete.
\end{proof}
 
\begin{theorem}\label{thm}
No local distribution can saturate the Finner inequality and be OPI in the 4-output triangle network.
\end{theorem}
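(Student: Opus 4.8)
The plan is to assume, for contradiction, that a local OPI distribution saturates the Finner inequality~\eqref{Finner}, and to show that its cube representation is so rigid that the whole local model collapses to one with only finitely many outcomes, which cannot be OPI. I would first fix what saturation means here: since the OPI condition forces every single-party marginal to equal $\frac14$, the Finner bound reads $p_{abc}\le\frac18$ for all outcomes, and the active constraint is on the all-agreeing probability, so the Finner boundary in this subspace is the locus $p_{111}=\frac18$. By the output symmetry $p_{000}=p_{111}=\dots$, every all-agreeing outcome $(v,v,v)$ then attains the bound, and Lemma~\ref{lemma} applies to each value $v\in\{0,1,2,3\}$: the region $Q_v\subseteq[0,1]^3$ where $a=b=c=v$ is a cube of side $\frac12$, aligned with the axes since it is the intersection of the three coordinate slabs. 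These four cubes are disjoint and together fill exactly half of the source cube.

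Next I would use the marginals to pin down the local response functions. Because $a=s_A(\beta,\gamma)$ does not depend on $\alpha$, the set $\{a=v\}=[0,1]_\alpha\times A_v$ is a prism along the $\alpha$-axis of volume $p(a=v)=\frac14$, so Alice's region $A_v=\{(\beta,\gamma):s_A=v\}$ has area $\frac14$; but it also contains the $\beta\gamma$-projection of $Q_v$, a $\frac12\times\frac12$ square of the same area, so $A_v$ coincides with that square up to a null set. The four squares $A_v$ are disjoint and of total area one, hence they tile the unit square and must be its four quadrants; the same argument applied to $B_v$ in the $\alpha\gamma$-plane and $C_v$ in the $\alpha\beta$-plane forces the analogous conclusion for Bob and Charlie.

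Consequently each party's output depends only on which half of $[0,1]$ each incident source value lies in: the three sources reduce to three independent fair bits, read consistently by the two parties they connect, and each local response becomes a bijection $\{0,1\}^2\to\{0,1,2,3\}$. The distribution is therefore supported on at most the $8$ equiprobable bit-triples. I would then close with a counting argument: four of these triples produce the all-agreeing outcomes, one per value, so at most four distinct \emph{non}-agreeing output triples can have positive probability. If $p_{112}>0$ then OPI forces all $36$ two-agreeing patterns into the support, and if $p_{123}>0$ all $24$ all-distinct patterns; both are impossible with a support of size at most four, so $p_{112}=p_{123}=0$. Then $4p_{111}+36p_{112}+24p_{123}=\frac12\neq1$ contradicts the normalization~\eqref{psum1}.

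I expect the main obstacle to be the rigidity step of the second and third paragraphs: upgrading the purely local statement ``each agreement region is a cube'' to the global statement that the entire model is a discrete shared-bit model is precisely what makes the finite-support count available, and it relies both on the tiling of each face square into the four quadrants and on checking that the two parties sharing a given source read the \emph{same} half, so that the three bits are globally consistent. The remaining counting step is then routine.
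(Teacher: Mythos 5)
Your proposal is correct, and while it starts from the same two ingredients as the paper --- saturation forces $p_{000}=p_{111}=p_{222}=p_{333}=\tfrac18$ and Lemma~\ref{lemma} makes each agreement region a side-$\tfrac12$ cube --- it pins down the positions of those cubes by a genuinely different argument. The paper fixes $Q_0=[0,\tfrac12]^3$ and then runs a packing argument: it shows the cube for output $1$ must sit in an octant because any ``sliced'' placement (slices spread over the $\beta$ axis by a total length $\delta>0$) leaves at most $\tfrac12\cdot\tfrac12\cdot(\tfrac12-\delta)<\tfrac18$ of room for $p_{222}$, forcing $\delta=0$. You instead use the single-party marginals: Alice's output-$v$ region in the $\beta\gamma$-face has area exactly $\tfrac14$ and contains the area-$\tfrac14$ projection of $Q_v$, so the two coincide a.e., and the four resulting product sets tile the unit square into (a rearrangement of) the four quadrants. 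This is more direct, avoids the case analysis over which pair of coordinates exceeds $\tfrac12$, and immediately discretizes the model into three shared bits with bijective response functions. The consistency issue you flag --- that the two parties attached to a given source read the \emph{same} bipartition of that source --- is in fact automatic, since $A_v$ and $C_v$ inherit the same factor $V_v$ of the box $Q_v=U_v\times V_v\times W_v$ on the $\beta$-axis, so that worry dissolves once the product structure is written out. Your closing step also differs slightly from the paper's: the paper observes that the four remaining octants give four all-distinct outcomes with probability $\tfrac18$ while the other twenty are zero, contradicting output-permutation invariance, whereas you bound the support by eight outcomes, conclude $p_{112}=p_{123}=0$ by OPI, and contradict the normalization~\eqref{psum1}; the two contradictions are equivalent in strength. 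The only point to keep explicit in a polished write-up is that ``cube'' throughout means ``product of three measure-$\tfrac12$ sets, up to measure-preserving relabelling of each axis,'' which is how both you and the paper implicitly use Lemma~\ref{lemma}.
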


\begin{proof}
(by contradiction) 

Let's take a probability distribution and suppose it saturates the Finner inequality and is invariant under exchange of parties and output. We now build the most general local strategy that achieves this probability distribution.

The distributions that saturate the Finner inequality have $p_{000}=\sqrt{p(a=0)p(b=0)p(c=0)}$. Since $p(a=0)=p(b=0)=p(c=0)$ and $p(a=0)=p(a=1)=p(a=2)=p(a=3)=\frac{1}{4}$ we have $p_{000}=p_{111}=p_{222}=p_{333}=\frac{1}{8}.$

Let's choose the output '0'. We can choose to order the labels to start in each axis with the column that contains the most of output '0', \corr{here it means that we place the volume corresponding to $p_{000}$ in a corner}. We know that the volume of this region corresponding to the output $p_{000}$ is $\frac{1}{8}$. (Fig.\ref{Fig:1})

Because the strategy is invariant under exchange of parties and outputs, we necessarily have $p(a=x)=p(b=x)=p(c=x)= \frac{1}{4}$ for $x$ any outputs because  $p_{xxx}=\frac{1}{8}$ for all $x$, so the only possible shape for $p_{000}$ is a cube as implied by the Lemma~\ref{lemma}. (Fig.\ref{Fig:2})

For the same reason, the volume corresponding to the other $p_{xxx}$ will necessarily have a shape equivalent to a cube (meaning there exists a reorder of the symbols that leads to a cube). 

In fact, \corr{$p_{111}$, $p_{222}$, and $p_{333}$} will be three cubes because $p_{000}$ being a cube imposes for instance that A, B, and C output '0' if they receive $\alpha, \beta, \gamma \leq \frac{1}{2}$, with $\alpha, \beta, \gamma$ the shared randomness. So for $p_{111}$, at least two of the \corr{parameters} $\alpha, \beta, \gamma$ should be greater than $\frac{1}{2}$. This can only be achieved by a shape equivalent to a cube taking place for $\alpha, \gamma \geq \frac{1}{2}$ or any other pair of $\alpha, \beta, \gamma$. This shape equivalent to a cube can only be a sliced cube. \corr{This cube can be sliced in the direction of the} $\beta$ axis (or the last direction we did not pick in the pair of $\alpha, \beta, \gamma$ previously) (Fig.\ref{Fig:3}). \corr{Finally}, if $p_{111}$ is not a cube for exactly  $\alpha, \gamma \geq \frac{1}{2}$ and $\beta \leq \frac{1}{2}$, there is not enough room left for $p_{222}=\frac{1}{8}$. Indeed, the maximum volume of $p_{222}$ is given by $\frac{1}{2}\times\frac{1}{2}\times(\frac{1}{2}-\delta)$, with $\delta$ the total length of $\beta \geq \frac{1}{2}$. The only solution for $p_{222}=\frac{1}{8}$ is $\delta=0$, meaning that $p_{111}$ has the shape of a cube.

So the only possibility for $p_{xxx}$ are cubes (Fig.\ref{Fig:4}), which does not lead to a distribution invariant under the symmetries we suppose (some $p_{123}=0$, but four of them are $1/8$), and this contradiction ends the proof.
\end{proof}

\section{Conclusion}
We have shown with the inflation technique that a large region of the output permutation invariant (OPI) subspace is not no-signalling with independent sources (NSI). Interestingly, our method seems to leave a NSI region above the Finner inequality. This makes us conjecture that NSI correlations exist beyond the Finner inequality.

More specifically, for the OPI subspace, we couldn't find a tight upper bound of the two-party marginal $E_2$. This would require proving a structure for every level of the inflation and computing where it converges. We could only find an analytical expression in the hexagon and the smaller polygons. Therefore, a novel idea is needed to find an exact bound for $E_2$. 

As a first step to prove the nonlocality of the elegant distribution, we gave an analytical proof of the nonlocality of the distributions that saturate the Finner inequality on the OPI subspace. To prove this, we used a novel idea using geometric arguments.

A proof of the nonlocality of the elegant distribution is naturally still a crucial direction for future research. Using a similar idea to the proof we gave may be helpful. It would require abandoning the properties of the distribution that saturates Finner, leading in general to many more local models to rule out.

\medskip

\emph{Acknowledgements.---} We thank Marc-Olivier Renou, Sadra Boreiri, Tamás Kriváchy, Alejandro Pozas-Kerstjens, and Victor Gitton for discussions and comments. We thank Bernard Gisin for the second optimization method. We acknowledge financial support from the Swiss National Science Foundation (project 2000021\_192244/1 and NCCR SwissMAP).

\section{Code availability}
We provide the code to realize the inflation with Gurobi at the link \url{https://github.com/Antoine0Girardin/Inflation-OPI-4-triangle}

\onecolumngrid
\appendix

\section{Polygon inflations}
\label{appendix:inflation}

The inflation technique consists in inflating the network in order to constrain the correlations in the original network. Depending on the type of constraints one adds on the inflated network, it is possible to constrain local, quantum or NSI correlations.

We consider a polygon inflation. To the best of our knowledge, this is the only useful NSI inflation for the triangle network. By supposing that all sources are equal, it is possible to consider every polygon, starting from the triangle, see Fig.~\ref{fig:scheme_inflation}, where the vertices represent the parties, and the edges show where the sources distribute correlations. The first level of the inflation has the shape of a square, with an additional source and a fourth party. The procedure can be continued for an arbitrarily large number of sources and parties.

For each level of the inflation, new constraints can be added using the no-signalling condition, see appendix~\ref{details_on_correlators}. The two-party correlator $E_2$ that appears in the original triangle network as well as in every inflated network can then be constrained. This polygon inflation technique does not allow one to constrain the other correlator of the triangle, the three-party correlator in a loop $E_3^o$, since no such loop exists in the other polygons.

\begin{figure*}[h]
		\centering
		\includegraphics[width=1\linewidth]{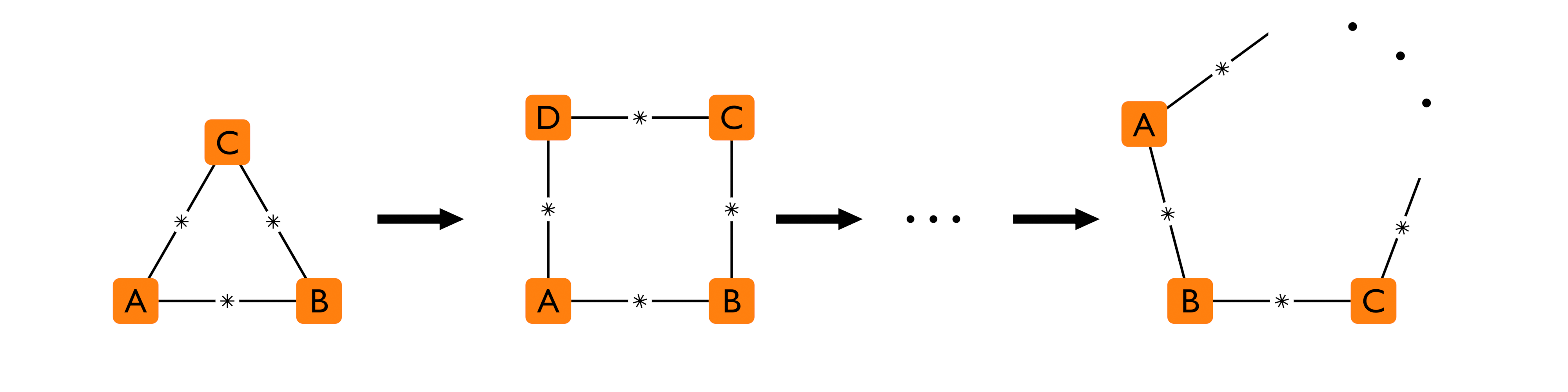}
 \caption{(left) The original triangle network. (middle) The first level inflation: the square inflation. (right) The general polygon inflation for an arbitrary number of parties.}
 \label{fig:scheme_inflation}
\end{figure*}

\section{Technical details about correlators} \label{details_on_correlators}
As mentioned in the main text, we write the four outputs of each party with two bits $a= (a_0, a_1) \in \{-1,1\}^2$, and define an additional bit $a_2 =a_0\cdot a_1$.
In general, we denote the correlators as $abcde…$, with each letter the label $j,k,l \in \{0,1,2\}$ of the correlated bit for the corresponding party. If the party does not have a correlated bit, we write a $0$ instead. The correlator $E_2 = \langle a_j \cdot b_j \rangle$ in the triangle will be written as $jj0$. In the square, the three-party correlator in a line $\langle a_j\cdot b_k\cdot d_j\cdot d_k\rangle = \langle a_j\cdot b_k\cdot d_l\rangle = \langle a_j\cdot b_k\cdot c_l\rangle$ will be noted $jkl0$.
Because of the symmetries, we have $jj0=0jj=j0j=kk0=…$

In the triangle, only two correlators are not trivially zero: $E_2 = jj0$ and $E_3^o = jkl$. In this network with these symmetries, we only have three different probabilities: the probability $p_{111}$ that all outputs are the same, the probability $p_{112}$ that two outputs are the same and the last is different, and finally $p_{123}$, the probability that all outputs are different.
We can relate the probabilities to the correlators with the equation \eqref{corr_to_probs}.

For the square, we have six different correlators:
\begin{tabular}{ c c c c c c c c c c } 
$jjjj$ & $jjkk$ & $jj00$ & $jkjk$ & $jkl0$ & $j0j0$ & 
\end{tabular}
and only one linear constraint $j0j0 = 0$, because $j0j0 = j00 \times j00$ and $j00=E_1=0$ is imposed by the symmetries. The matrix $C$ that links the correlators to the probabilities is given in the equation \eqref{matrix_square} with the first line being the normalization.

\begin{equation}\label{matrix_square}
	C=
	\begin{pmatrix}
		4& 48& 24& 96& 12& 48& 24& \\
		4& -16& 24& -32& 12& -16& 24& \\
		4& -16& 8& 0& -4& 16& -8& \\
		4& 16& 8& 0& -4& -16& -8& \\
		4& -16& -8& 32& 12& -16& -8& \\
		4& 0& -8& 0& -4& 0& 8& \\
		4& 16& -8& -32& 12& 16& -8& \\
	\end{pmatrix}
\end{equation}

For the pentagon, we have ten different correlators : 

\begin{tabular}{ c c c c c c c c c c c c c c c } 
$jjjj0$ & $jjjkl$ & $jjkjl$ & $jjkk0$ & $jjk0k$ & $jj000$ & $jkjk0$ & $jkl00$ & $jk0l0$ & $j0j00$ & 
\end{tabular}

The two linear constraints are $jk0l0=0$ and $j0j00=0$. 

For the hexagon, we have 32 correlators : 

\begin{tabular}{ c c c c c c c c c c c c c c c } 
$jjjjjj$ & $jjjjkk$ & $jjjj00$ & $jjjkjk$ & $jjjkl0$ & $jjjk0l$ & $jjj0j0$ & $jjkjjk$ & $jjkjl0$ & $jjkj0l$ & $jjkkll$ & $jjkk00$ & $jjklj0$ & $jjklkl$ & $jjkllk$ \\ 
$jjk0k0$ & $jjk00k$ & $jj0jj0$ & $jj0kk0$ & $jj0000$ & $jkjk00$ & $jkjlj0$ & $jkjlkl$ & $jkj0k0$ & $jkljkl$ & $jkl000$ & $jk0jk0$ & $jk0kj0$ & $jk0l00$ & $j0j000$ \\ 
$j0k0l0$ & $j00j00$ & 
\end{tabular}

We have 10 linear constraints : $jjj0j0=0$, $jjk0k0=0$, $jkj0k0=0$, $jk0jk0=0$, $jk0kj0=0$, $jk0l00=0$, $j0j000=0$, $j0k0l0=0$, $j00j00=0$, $jj0kk0=jj0jj0$.

The quadratic constraint is $jj0jj0=jj0000 \times jj0000$

For the heptagon, we have the 72 correlators : 

\begin{tabular}{ c c c c c c c c c c c c c } 
$jjjjjj0$ & $jjjjjkl$ & $jjjjkjl$ & $jjjjkk0$ & $jjjjk0k$ & $jjjj000$ & $jjjkjjl$ & $jjjkjk0$ & $jjjkj0k$ & $jjjkkj0$ & $jjjkkkl$ & $jjjkklk$ & $jjjkl00$ \\ 
$jjjk0l0$ & $jjjk00l$ & $jjj0j00$ & $jjj0kl0$ & $jjkjjk0$ & $jjkjkj0$ & $jjkjkkl$ & $jjkjklk$ & $jjkjlkk$ & $jjkjl00$ & $jjkj0jk$ & $jjkj0l0$ & $jjkj00l$ \\ 
$jjkkjj0$ & $jjkkjlk$ & $jjkkll0$ & $jjkkl0l$ & $jjkk000$ & $jjklj00$ & $jjklkl0$ & $jjklk0l$ & $jjkllk0$ & $jjkl0j0$ & $jjkl0kl$ & $jjkl0lk$ & $jjk0jl0$ \\ 
$jjk0j0l$ & $jjk0k00$ & $jjk0lj0$ & $jjk00k0$ & $jjk000k$ & $jj0jj00$ & $jj0jkl0$ & $jj0j0j0$ & $jj0kjl0$ & $jj0kk00$ & $jj0k0k0$ & $jj00000$ & $jkjkjkl$ \\ 
$jkjkl0l$ & $jkjk000$ & $jkjlj00$ & $jkjlkl0$ & $jkjlk0l$ & $jkjl0j0$ & $jkj0jl0$ & $jkj0k00$ & $jkljkl0$ & $jklj0j0$ & $jkl0000$ & $jk0jk00$ & $jk0j0k0$ \\ 
$jk0kj00$ & $jk0k0j0$ & $jk0l000$ & $jk00l00$ & $j0j0000$ & $j0k0l00$ & $j00j000$ & 
\end{tabular}

The 27 linear constraints are $jjjk0l0=0$, $jjj0j00=0$, $jjj0kl0=0$, $jjkj0l0=0$, $jjkl0j0=0$, $jjk0jl0=0$, $jjk0j0l=0$, $jjk0k00=0$, $jjk0lj0=0$, $jjk00k0=0$, $jkjl0j0=0$, $jkj0jl0=0$, $jkj0k00=0$, $jklj0j0=0$, $jk0jk00=0$, $jk0j0k0=0$, $jk0kj00=0$, $jk0k0j0=0$, $jk0l000=0$, $jk00l00=0$, $j0j0000=0$, $j0k0l00=0$, $j00j000=0$, $jj0j0j0=0$, $jj0k0k0=0$, $jj0kjl0=jj0jkl0$, $jj0kk00=jj0jj00$.

The two quadratic constraints are $jj0jj00=jj00000\times jj00000$, $jj0jkl0=jj00000\times jkl0000$.

With the same method we have 236 correlators for the octagon, 114 linear constraints and 6 quadratic ones.

Finally, for the enneagon, we have 702 correlators, 395 linear constraints and 14 quadratic ones. For this polygon, we have correlators like $jj0jj0jj0=jj0000000\times jj0000000 \times jj0000000 $, but we can replace this cubic equation with the quadratic one $jj0jj0jj0=jj0000000\times jj0jj0000$ since we already have the constraint $jj0jj0000= jj0000000 \times jj0000000$.

The computational time is too big to continue this inflation for higher polygons. On an Intel Core i7-1185G7, the Gurobi optimization took 7 minutes for the heptagon, but could not finish in 24 hours for the octagon. The solution given in the Table~\ref{tab:results} was found after 5 hours. We could not find a solution \corr{in 168 hours} for the enneagon with Gurobi. 

In the Table~\ref{tab:results}, we have added a column “$E_2$ max with previous polygons”. These results are obtained by adding the constraints of all previous polygons and constraints like $jj0=jj00$, meaning that the two-party correlator should be the same for any polygon. This allows a slightly better upper bound on $E_2$ and simplifies the computations because replacing correlators in the large polygon by the same one in the smaller polygons leads to simpler constraints. \corr{For instance, for the enneagon, the constraints with an $E_2^2$ have 703 probabilities linked with the $E_2$ correlator that need to be squared, leading to many possible branchings during the Gurobi optimization procedure. When replacing this $E_2$ in the enneagon by the $E_2$ in the triangle, the 703 probabilities are replaced by only 3, reducing the number of quadratic terms and the computational time}. For comparison, the heptagon with previous polygons takes 5 seconds, we have a solution for the octagon after a few seconds, and the full optimization takes a few hours.

\section{Illustrations of the proof of the theorem~\ref{thm}}
\label{appendix:proof}
\begin{figure}[!htb]
    \begin{minipage}{0.48\textwidth}
        \centering
        \includegraphics[width=.8\linewidth]{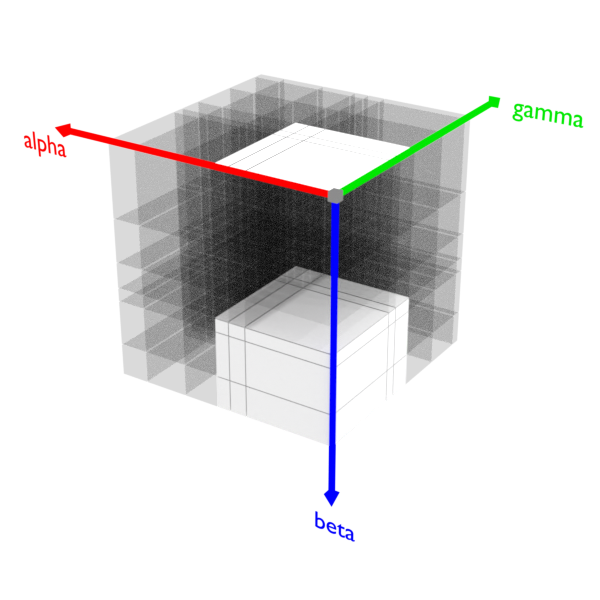}
        \caption{Cube with ordered output 0 (white) in the cube representation. \corr{This white box has a volume that corresponds to the probability $p_{000}$.}}\label{Fig:1}
    \end{minipage}\hfill
    \begin{minipage}{0.48\textwidth}
        \centering
        \includegraphics[width=.8\linewidth]{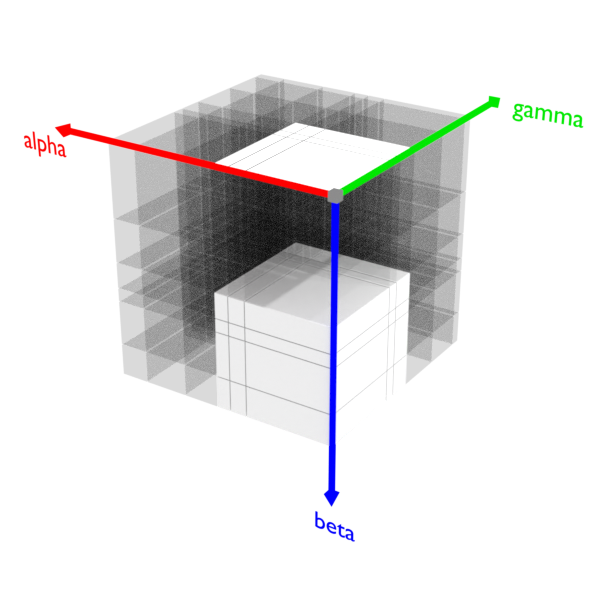}
        \caption{Cube with the only possible size for the output 0 (a cube of side $1/2$).}\label{Fig:2}
    \end{minipage}

\end{figure}

\begin{figure}[h]
\begin{minipage}{0.48\textwidth}
	\centering
	\includegraphics[width=.8\linewidth]{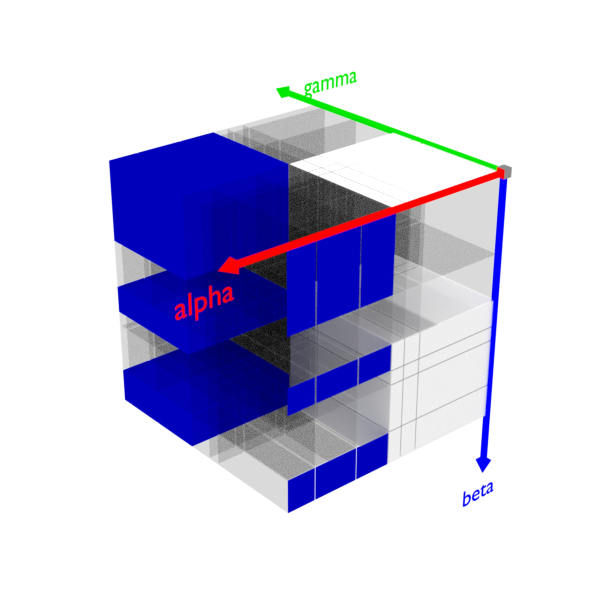}
	\caption{A sliced cube strategy for the output 1 (blue). \corr{This blue volume corresponds to the probability $p_{111}$. Its shape is equivalent to a cube, in the sense that there exists a reordering of the axis $\beta$ that leaves it as a cube.}}\label{Fig:3}
\end{minipage}\hfill
\begin{minipage}{0.48\textwidth}
	\centering
	\includegraphics[width=.8\linewidth]{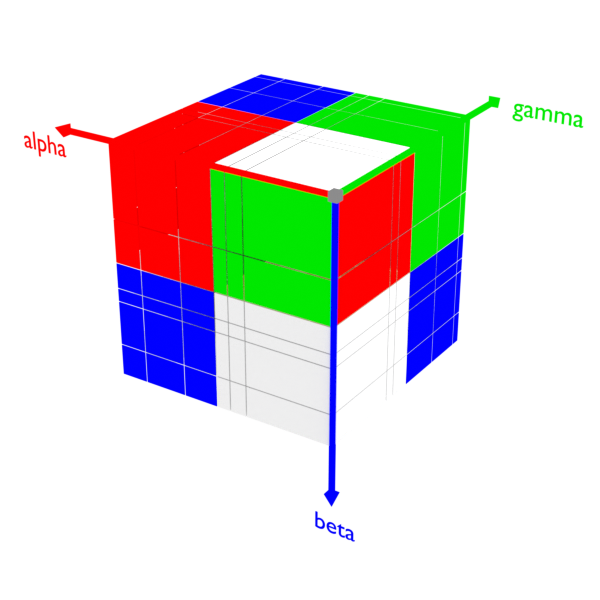}
	\caption{Fully coloured cube that saturates the Finner inequality and satisfies the condition $p_{111}=1/8$, but does not give an OPI distribution.}\label{Fig:4}
\end{minipage}
\end{figure}

\newpage

\twocolumngrid
\bibliography{main}

\end{document}